\newcommand{\subparagraph}{}
\newcommand{\vast}{\bBigg@{2.6}}
\newcommand{\Vast}{\bBigg@{4.6}}
\newcommand{\ignore}[1]{}
\newtheorem{theorem}{Theorem}
\begin{document}

\title{Adaptive Task Allocation for Mobile Edge Learning}

% author names and affiliations
% use a multiple column layout for up to three different
% affiliations
\iffalse
\author{\IEEEauthorblockN{Umair Mohammad}
\IEEEauthorblockA{Electrical and Computer Engineering Department\\
University of Idaho\\https://v2.overleaf.com/project/5ba80c7488d2a8101340e4ee
Moscow, ID, USA \\
Email: g201204420@kfupm.edu.sa}
\and
\IEEEauthorblockN{Sameh Sorour}
\IEEEauthorblockA{University of Idaho\\
Moscow, ID, USA\\
Email: samehsorour@uidaho.edu}}\fi

\author{\IEEEauthorblockN{Umair Mohammad and Sameh Sorour}
\IEEEauthorblockA{Department of Elecrical and Computer Engineering, University of Idaho, Moscow, ID, USA\\
moha2139@vandals.uidaho.edu, samehsorour@uidaho.edu}
}

% make the title area
\maketitle

\begin{abstract}
This paper aims to establish a new optimization paradigm\ignore{ for jointly allocating resources and distributed learning tasks to} to efficiently execute distributed learning tasks on wireless edge nodes with heterogeneous computing and communication capacities. We will refer to this new paradigm as ``Mobile Edge Learning (MEL)''. The problem of adaptive task allocation for MEL is considered in this paper with the aim to maximize the learning accuracy, while guaranteeing that the total times of data distribution/aggregation over heterogeneous channels, and local computation on heterogeneous nodes, are bounded by a preset duration. The problem is first formulated as a quadratically-constrained integer linear problem. Being NP-hard, the paper relaxes it into a non-convex problem over real variables. We then propose a solution based on deriving analytical upper bounds on the optimal solution of this relaxed problem using KKT conditions. The merits of this proposed solution is exhibited by comparing its performances to both numerical approaches and the equal task allocation approach.  
\end{abstract}

\IEEEpeerreviewmaketitle

\section{Introduction}
\label{Section1_Introduction}
%\subsection{Motivation and Background}
The accelerated migration towards the era of smart cities mandates the deployment of a large number of Internet-of-Things (IoT) devices, generating exponentially increasing amounts of data at the edge of the network.\ignore{ This data is currently sent to cloud severs for analytics and decision-making that improve the performance of a wide range of systems and services. However,} It is expected that the rate and nature of this generated data will prohibit their centralized processing and analytics at cloud servers. Indeed, the size of data will surpass the capabilities of current and even future wireless networks and internet backbone to transfer them to cloud data-centers \cite{Ref1_Original}. In addition,\ignore{ the nature of this data and} the time-criticality of their analytics will enforce 90\% of their processing to be done locally at edge servers or even at the edge (mostly mobile) nodes themselves (e.g., smart phones,\ignore{ laptops,} monitoring cams, drones, connected\ignore{and autonomous} vehicles) \cite{Ref2_Original}.

This booming need for edge processing is supported by the late advancements in mobile edge computing (MEC) \cite{P0_Letaief_Survey,RefsPap_MEC126, Liu2017a, RefsPap_MEC121} and hierarchical MEC (H-MEC) \cite{RefsPap_MEC125,Moha1812:Multi}. While the former enables edge nodes to\ignore{ decide on whether to} either perform their computing task locally or offload them to the edge servers, the latter further enables task offloading among edge nodes themselves. Such task offloading decisions are usually made while respecting the heterogeneous computing and communication capacities of the edge nodes and links, respectively, so as to minimize\ignore{ task completion} delays and/or energy consumption. While most of the previous works on MEC/H-MEC were limited to\ignore{ scenarios with nodes performing} independent computing tasks, one of the most important forthcoming application of MEC/H-MEC is implementing distributed learning (DL) on edge nodes. Using DL, multiple edge nodes can collectively analyze and learn from their possessed data with no or limited dependence on cloud/edge servers. 

Distributed learning (DL) has recently attracted much attention within the machine learning (ML) realm, motivated by two trending scenarios in the cyber-world, namely the distributed-datasets scenario and the task-parallelization scenario. In the former, (big) datasets\ignore{ about the same phenomenon} are separately generated/collected by multiple nodes, but cannot be transferred to a central hub for analytics due to some constraints (e.g., bandwidth, privacy) \cite{HarvardDistributed}. In this case, the learning process cycles between these nodes (a.k.a. learners) performing local training/learning on their individual datasets, and a central orchestrator collecting the locally derived parameters, performing global processing, and returning globally updated parameters to the learners. On the other hand, the task-parallelization scenario usually involves a main node that parallelizes the learning process over its local dataset on multiple cores/nodes due to one or multiple reasons (e.g., limited main node resources, faster processing, lower energy consumption) \cite{GoogleDistWired}. Thus, this main node must distribute its dataset (or most commonly random parts of it) to these multiple cores/nodes for local learning, and itself acts as the orchestrator of the process.

Obviously, performing learning in mobile, edge, and IoT environments are the clear manifestation of both above scenarios. Indeed, edge nodes generate/collect data that they cannot share to edge/cloud servers due to bandwidth limitations. Many of them are also computationally-limited, and can thus use H-MEC to parallelize the learning process over multiple neighboring nodes for faster processing and/or lower energy consumption. Yet, most works on DL are studied over wired and/or non-heterogeneous distributed computing and data transfer environments \cite{GoogleDistWired}. Extending these DL studies to resource-constrained edge servers and nodes was not explored until very recently. Some works \cite{Tuor_01_AdaptiveControl, Tuor_02_2018_DemoAbstract} aimed to unify the number of local learning iterations in resource-constrained edge environments in order to maximize the learning accuracy. The proposed approach jointly optimized the number of local learning and global update cycles at the learners and the orchestrator, respectively. However, these works overlooked the inherent heterogeneities in the computing and communication capacities of different learners and links, respectively. The implications of such heterogeneities on optimizing the task allocation to different learners, selecting learning models, improving learning accuracy, minimizing local and global cycle times, and/or minimizing energy consumption, are clearly game-changing, but yet never investigated.

To the best of the authors' knowledge, this work is the first attempt to synergize the novel trends of DL and H-MEC, in order to establish an optimization framework for efficiently executing distributed learning tasks on a neighboring set of heterogeneous wireless edge nodes. We will refer to this new paradigm as ``Mobile Edge Learning (MEL)''. This paper will inaugurate this MEL research, by considering the problem of adaptive task allocation for distributed learning over heterogeneous wireless edge learners (i.e., edge nodes with heterogeneous computing capabilities and heterogeneous wireless links to the orchestrator). This task allocation will be conducted so as to maximize the learning accuracy, while guaranteeing that the total times of data distribution/processing, and parameter aggregation in such a heterogeneous environment are bounded by a preset duration by the orchestrator. The maximization of the learning accuracy is achieved by maximizing the number of local learning iterations per global update cycle \cite{Tuor_01_AdaptiveControl}. 

To this end, the problem is first formulated as quadratically-constrained integer linear problem. Being an NP-hard problem, the paper relaxes it to a non-convex problem over real variables. Analytical upper bounds on the optimal solution of this relaxed problem are derived using KKT conditions. The proposed algorithm will thus start from these computed bounds, and then runs suggest-and-improve steps to reach a feasible integer solution. The merits of this proposed solution will be exhibited through extensive simulations, comparing its performances to both numerical solutions and the equal task allocation approach of \cite{Tuor_01_AdaptiveControl,Tuor_02_2018_DemoAbstract}.

%Therefore, our work accomplishes the following: 1) Ensures maximum accuracy at each node by maximizing the number of local computing iterations, 2) optimizing task allocation based on resource constraints in a typical wireless edge environment, and 3)\red{NOT CLEAR with random dataset allocation per iteration we can achieve stochastic batch gradient descent which converges faster, has higher accuracy and is shown to converge for parallel implementations in \cite{Wei2016}}.

\section{System Model for MEL}
\label{Section02__SystemModelParameters}

\subsection{Distributed Learning Preliminaries}
 Distributed learning is defined as the operation of running one machine learning task over a system of $K$ learners. 
 Due to processing/memory and learning time constraints resulting from large sizes of typical datasets, a stochastic gradient descend (SGD) approach is often employed for DL with mini-batch training\ignore{ in the MEL realm}\footnote{This choice is also justified by the generality of this approach and its proven superior learning accuracy performance compared to deterministic gradient decent (GD) methods \cite{SGDisBetter}.}.
 %Due to both the large sizes of typical datasets and learning time constraints, a stochastic gradient descend (SGD) approach is often employed for DL\ignore{ in the MEL realm}\footnote{This choice is also justified by the generality of this approach and its proven superior learning accuracy performance compared to deterministic gradient decent (GD) methods \cite{SGDisBetter}.}. 
 In SGD, the learning cycles are performed on randomly picked subsets (that are referred to as "batches") from the multiple datasets locally stored at the different learners (i.e., distributed-datasets scenario) or the one global orchestrator dataset (i.e., task-parallelization scenario). Clearly, the latter scenario fully encompasses the former, but only adds to it the batch transfer component from the orchestrator to the learners. We will thus mainly consider the latter scenario in this paper, but will show the variations in the model\ignore{/formulation/solution} when the former scenario is considered.
 
 Learner $k$, $k \in\{ 1, 2, \dots, K\}$ trains its local learning model or learns from a batch of size $d_k$ data samples, so as to minimize the local loss function \cite{Tuor_01_AdaptiveControl}. The total size of all batches is denoted by $d=\sum_{k=1}^K d_k$, which is usually preset by the orchestrator $O$ given its computational capabilities, the desired accuracy, and the time-constraints of the training/learning process. 
 %Fig. \ref{figure0label} illustrates the described MEL system.  
 The number of local iterations (a.k.a. local updates) run by learners on their allocated batch is denoted by $\tau$. 
 
% \begin{figure}[t]
%   \centering
%   \includegraphics[scale=0.5]{BlockDiagramFull.jpg}
%   \caption{Block diagram of an expected MEL over multiple edges}
%   \label{figure0label}
%\end{figure}
 
 Once each learner finishes its $\tau$ local updates, it forwards the resulting local parameter matrix to the orchestrator. The size of each learner's parameter matrix usually depends on its assigned batch size, its employed learning model, and the selected data precision. The orchestrator then aggregates the local update matrices and updates the unique global parameter matrix by minimizing the global loss function. This step from the orchestrator is called the global update process. Afterwards, the global parameter matrix is sent to learners and the process cycles between local update cycles and global update processes until the orchestrator decides to end it once it reaches a target accuracy. Interested readers in the local/global loss function minimization and local parameter aggregations are referred to \cite{Tuor_01_AdaptiveControl,Tuor_02_2018_DemoAbstract} for further details. 

 In a synchronous setting, the global update process should occur in periodic cycles, that we will refer to as the global update cycles. This duration should include the transmission of the batches and/or the global parameter matrix to learners, the time for $\tau$ local update cycles at each of them, their forwarding of the local parameter matrices to the orchestrator, and the global update process.
 The orchestrator performs the aggregation of the parameters only once after all learners send back their result in that global update cycle.
 
 To this end, define the time $T$ as the duration needed to perform the first three steps (i.e., excluding the global update process). We will refer to the time $T$ as the global cycle clock to differentiate it from the total global cycle duration consisting of the global cycle clock and the global update process time. We care about this time $T$ as its included processes fully depend on the batch sizes and employed learning model sizes at the learners, and most importantly on how those relate to the computing capabilities of the different learners as well as their communication capacities to the orchestrator. Whereas these components may not be of significant influence when DL is executed over controlled wired and infrastructural servers, their high heterogeneity can tremendously impact the performance of DL when applied in wireless and mobile edge environments. This is where the MEL paradigm comes into play as will be detailed in the next section.

\subsection{Transition to MEL}
In this section, we introduce the parameters that relate to the heterogeneity of the computing and communication capacities of wireless edge learners, and how they relate to the steps of the global update clock duration. Define $B_k^{data}$ as the number of bits of the batch allocated to learner $k$, which can be expressed as follows:
\begin{equation}
B_k^{data} = d_k\mathcal{F}\mathcal{P}_d
\end{equation}
where $\mathcal{F}$ is the number of features in the dataset, and $\mathcal{P}_d$ is the data bit precision. On the other hand, the number of bits of learner $k$'s local parameter matrix (denoted by $B_k^{model}$) can be expressed for learner $k$ as:
\begin{equation}
B_k^{model} = \mathcal{P}_m\left(d_kS_d+S_m\right)
\end{equation}
where $\mathcal{P}_m$ is the model bit precision (typically floating point precision). As shown in the above equation, the local parameter matrix size consists of two parts, one depending on the batch size (represented by the term $d_kS_d$, where $S_d$ is the number of model coefficients related to each sample of the batch), and the other related to the constant size of the employed ML model (denoted by $S_m$). 

%\begin{figure}[t]
%   \centering
%   \includegraphics[scale=0.58]{BlockDiagramFull.jpg}
%   \caption{Illustration of an MEL with batch allocation}
%   \label{figure0label}
%\end{figure}

The orchestrator sends the  bit concatenation of the data batch and initial global parameter matrix to learner $k$ with power $P_{ko}$ over a wireless channel, having a bandwidth $W$ and a \ignore{complex} channel power gain $h_{ko}$.
Once this information is received by each learner $k$, it sets is local parameter matrix $\tilde{\mathbf{w}}_k$ to the initial matrix $\mathbf{w}$ provided by the orchestrator. It then performs $\tau$ local update cycles on this local parameter matrix, using its allocated batch. Typically in ML, the algorithm sequentially goes over all features of each data sample once in one iteration (or epoch).
%For each iteration in typical ML, the algorithm sequentially goes over all features once for each data sample. 
Consequently, the number\ignore{$X_k$} of computations required per iteration $X_k$ is equal to:
\begin{equation}
    X_k = d_k C_m
\end{equation}
which clearly depends on the number of data samples $d_k$ assigned to each learner and the computational complexity $C_m$ of the model.\ignore{ Although the data may be stored as other types, the operations themselves are typically floating point operations. }

We assume that the bits of the computed local parameter matrix $\tilde{\mathbf{w}}_k$ at each learner $k$ is transmitted back to the orchestrator with the same power $P_{ko}$ on the same channel. The orchestrator will then re-compute the global parameter matrix $\mathbf{w}$ as described in \cite{Tuor_01_AdaptiveControl}. Once computed, it send this matrix back with a new random batch of samples from the dataset to each learner, and the process repeats.

Given the above description, the times of each learner $k$, $\forall~k$, whose sum must be bounded by the global update clock $T$, can be detailed as follows. The first time $t_k^S$ is the time needed to send the allocated batch and global parameter matrix $\mathbf{w}$ to learner $k$. $t_k^S$ can thus be expressed as (where $N_0$ is the noise power spectral density)\footnote{Note that, for the distributed-datasets scenario, the only difference in the model is that the first term of the numerator will not exist.}:
    \begin{equation}
    t_{k}^S \ignore{&= \dfrac{B_k^{data}+B_k^{model}}{R_{k}} \nonumber\\
    &} = \dfrac{d_k\mathcal{F}\mathcal{P}_d + \mathcal{P}_m \left(d_k\mathcal{S}_d+\mathcal{S}_m\right)}{W\log_2\left(1+\frac{P_{ko} h_{ko}}{N_0}\right)}
    \label{eq:edge_time_sending}
    \end{equation}
The second time consists of $\tau$ times $t_k^C$, needed by learner $k$ to perform one local update cycle. Defining $f_k$ as learner $k$'s local processor frequency dedicated to the DL task, $t_k^C$ can be expressed as:
    \begin{equation}
    t_k^C = \dfrac{X_k}{f_k} = \frac{d_k C_m}{f_k}
    \label{Eq_9_localTimeExecution}
    \end{equation}
The third and last time $t_k^R$ is the one needed for learner $k$ to send its updated local parameter matrix $\tilde{\mathbf{w}}_k$ to the orchestrator. Assuming the channels are reciprocal and does not change during the duration of one global update cycle, $t_k^R$ can be computed as:
    \begin{equation}
     t_{k}^R = \ignore{\dfrac{B_k^{model}}{R_{k}} = } \dfrac{\mathcal{P}_m \left(d_k\mathcal{S}_d+\mathcal{S}_m\right)}{W\log_2\left(1+\frac{P_{ko} h_{ko}}{N_0}\right)}
    \label{eq:edge_time_receiving}
    \end{equation}

Thus, the total time $t_k$ taken by learner $k$ to complete the above three processes is equal to:
\begin{align}
\label{Eq_12_timeForLearnerk}
    t_k &= t_{k}^S+\tau t_k^C + t_k^R   \nonumber\\
    %= \dfrac{B_k^{data}+2B_k^{model}}{W\log_2\left(1+\frac{P_{ko} h_{ko}}{N_0}\right)}+ \tau \dfrac{X_k}{f_k}
    & = \dfrac{d_k\mathcal{F}\mathcal{P}_d + 2\mathcal{P}_m \left(d_k\mathcal{S}_d+\mathcal{S}_m\right)}{W\log_2\left(1+\frac{P_{ko} h_{ko}}{N_0}\right)} + \tau \dfrac{d_k\mathcal{C}_m}{f_k} 
\end{align}
\ignore{We will refer to the time $t_k$ of learner $k$ as its round-trip distributed processing duration.}As described above, $t_k\leq T$ $\forall~k$ for the orchestrator to have all the needed information to perform its global cycle update process. 

\section{Problem Formulation}
\ignore{As mentioned in Section I, the objective of this paper is to optimize the task allocation (i.e., distributed batch sizes $d_k$ for each learner $k$), so as to maximize the accuracy of the distributed learning process in each global cycle (and thus eventually the accuracy of the entire learning process) within a preset global cycle clock $T$ by the orchestrator.} It is well established in the literature that the loss function in general GD/SGD-based ML are minimized (and thus the learning accuracy is maximized) by increasing the number of learning iterations \cite{GoogleDistWired}. For DL, this is equivalent to maximizing the number of local iterations $\tau$ in each global cycle \cite{StalenessAwarePaper}. Thus, maximizing the MEL accuracy is achieved by maximizing $\tau$ \ignore{for a fixed number of global update cycles}. Given the above model and facts, this paper's objective, as stated in Section I, can be re-worded as optimizing the assigned batch sizes $d_k$ to each of the learners so as to maximize the number of local iterations $\tau$ per global updated cycle, while bounding $t_k$ $\forall~k$ by the preset global cycle clock $T$. The optimization variables in this problems are thus $\tau$ and $d_k$, $k\in\{1,\dots,K\}$. 

We can thus re-write the expression of $t_k$ in (\ref{Eq_12_timeForLearnerk}) as a function of the optimization variables as follows:
\begin{equation}
    t_k = C_k^2  \tau d_k + C_k^1 d_k + C_k^0
    \label{Eq_16_C1isCompact}
\end{equation}
where $C_k^2$, $C_k^1$, and $C_k^1$ represent the quadratic, linear, and constant coefficients of learner $k$ in terms of the optimization variables $\tau$ and $d_k$, expressed as:\footnote{Note that, for the distributed-datasets scenario, the only difference in the model is that the first term of the numerator in (\ref{eq:linear}) will not exist.}
\begin{align}
C_k^2 &= \dfrac{\mathcal{C}_m}{f_k}\\
C_k^1 &= \dfrac{\mathcal{F}\mathcal{P}_d+2\mathcal{P}_m\mathcal{S}_d}{W\log_2\left(1+\frac{P_{ko} h_{ko}}{N_0}\right)} \label{eq:linear}\\
C_k^0 &= \dfrac{2\mathcal{P}_m\mathcal{S}_m}{W\log_2\left(1+\frac{P_{ko} h_{ko}}{N_0}\right)}
\end{align}
Clearly the relationship between $t_k$ and the optimization variables $d_k$ and $\tau$ is quadratic. Furthermore, the optimization variables $\tau$ and $d_k$ $\forall~k$ are all non-negative integers. Consequently, the problem of interest in this paper can be formulated as an integer linear program with quadratic and linear constraints as follows: \footnote{Note that, for the distributed-datasets scenario, the only difference in the formulation is the simpler expression of $C^1_k$. Thus, the problem type and solution remain the same with different $C^1_k$ expressions for the two scenarios.}
\begin{subequations}
\begin{align}
    &\qquad\operatornamewithlimits{max}_{\tau,{d}_k~\forall~k}  \quad \tau\\
    & \quad \nonumber\\
    \text{s.t. }\qquad & C_k^2  \tau d_k + C_k^1 d_k + C_k^0 \leq T, \quad k = 1,\ldots,K \label{orignial-time-const}\\
    & \sum_{k = 1}^{K}d_k = d \label{orignial-batch-const}\\ 
    & \tau \in \mathcal{Z}_+ \label{orignial-tau-const}\\
    & d_k \in \mathcal{Z}_+, \quad k = 1,\ldots,K \label{orignial-d-const}
    \end{align}
    \label{Eq_13_OurProb}
\end{subequations}
Constraint (\ref{orignial-time-const}) guarantees that $t_k\leq T$ $\forall~k$. Constraint (\ref{orignial-batch-const}) ensures that the sum of batch sizes assigned to all learners is equal to the total dataset size that the orchestrator needs to analyze. Constraints (\ref{orignial-tau-const}) and (\ref{orignial-d-const}) are simply non-negativity and integer constraints for the optimization variables. Note that the solutions of (\ref{Eq_13_OurProb}) having $\tau$ and/or all $d_k$'s being zero represent settings where MEL is not feasible.

Thus, the above problem is an integer linear program with quadratic constraints (ILPQC), which is well-known to be NP-hard \cite{QIP_NP_ArXiV}. We will thus proposed a simpler solution to it through relaxation of the integer constraint in the next section.
 
\section{Proposed Solution}\label{Section3_Solution}
%\section{Solution Approach}

\subsection{Problem Relaxation}
As shown in the previous section, the problem of interest in this paper is NP-hard due to its integer decision variables. We thus propose to simplify the problem by relaxing the integer constraints in (\ref{orignial-tau-const}) and (\ref{orignial-d-const}), solving the relaxed problem, then rounding the obtained real results back into integers. The relaxed problem can be thus given by:
\begin{subequations}
\begin{align}
    &\qquad\operatornamewithlimits{max}_{\tau,{d}_k~\forall~k}  \quad \tau\\
    & \quad \nonumber \\
    \text{s.t. }\qquad & C_k^2  \tau d_k + C_k^1 d_k + C_k^0 \leq T, \quad k = 1,\ldots,K \label{relaxed-time-const}\\
    & \sum_{k = 1}^{K}d_k = d \label{relaxed-batch-const}\\ 
    & \tau \geq 0 \label{relaxed-tau-const}\\
    & d_k \geq 0, \quad k = 1,\ldots,K \label{relaxed-d-const}
    \end{align}
    \label{Eq_13_RelaxedProblem}
\end{subequations}
The above resulting program becomes a linear program with quadratic constraints. This problem can be solved by using interior-point or ADMM methods, and there are efficient solvers (such as OPTI) that implement these approaches \cite{OPTI_CW12a}. From the analytical viewpoint, the associated matrices to each of the quadratic constraints in (\ref{relaxed-time-const}) can be written in a symmetric form. However, these matrices will have two non-zero values that are positive and equal\ignore{ to each other}. The eigenvalues will thus sum to zero, which means these matrices are not positive semi-definite, and hence the relaxed problem is not convex. Consequently, we cannot derive the optimal solution of this problem analytically. Yet, we can still derive upper bounds on the optimal variables and solution using KKT conditions. 
%The philosophy of our proposed solution is thus to calculate these upper bounds values on the optimal variables, then implement suggest-and-improve (SAI) steps until a feasible integer solution is reached. 
%Therefore, the next two subsections will show how to derive the upper bounds on the optimal variables using KKT conditions on the relaxed non-convex problem.

\subsection{Upper Bounds using KKT conditions}
The Lagrangian function of the relaxed problem is given by:
\begin{multline}\label{lagrangian}
    L\left(\mathbf{x}, \lambda, \nu, \alpha\right) = -\tau + \sum_{k = 1}^{K} \lambda_k\left(C_k^2  \tau d_k + C_k^1 d_k + C_k^0 - T\right) + \\ \nu_1\left(\sum_{k = 1}^{K}d_k - d\right) - \nu_2\left(\sum_{k = 1}^{K}d_k - d\right) - \alpha_0\tau - \sum_{k = 1}^{K} \alpha_kd_k    
\end{multline}
where the $\lambda_k$'s $k\in\{1,\dots,K\}$, $\nu_1$/$\nu_2$, and $\alpha_0$/$\alpha_k$ $k\in\{1,\dots,K\}$, are the Lagrangian multipliers associated with the time constraints of the $K$ learners in (\ref{relaxed-time-const}), the total batch size constraint in (\ref{relaxed-batch-const}), and the non-negative constraints of all the optimization variables in (\ref{relaxed-tau-const}) and (\ref{relaxed-d-const}), respectively. Note that the equality constraint in (\ref{relaxed-batch-const}) can be represented in the form of two inequality constraints $\sum_{k = 1}^{K}d_k \leq d$ and $-\sum_{k = 1}^{K}d_k = d) \leq 0$, and thus is associated with two Lagrangian multipliers $\nu_1$ and $\nu_2$. 
 
Using\ignore{ the well-known} KKT conditions, the following theorem introduces upper bounds on the optimal variables of the relaxed problem.
\begin{theorem}
\ignore{The optimal values }$d_k^*$ \ignore{of the allocated batch sizes to different users in the relaxed problem }satisfy the following bound:
\begin{equation}
    d_k^* \leq \dfrac{T-C_k^0}{\tau C_k^2+C_k^1} \qquad \forall~k
    \label{Eq_30_BoundOnDk}
\end{equation}
%\begin{equation}\label{tau_bound}
  %  \tau^* \leq \dfrac{\dfrac{K^2}{d}-\sum_{k=1}^{K}\dfrac{C_k^1}{r_k^0}}{\sum_{k=1}^{K}\dfrac{C_k^2}{r_k^0}},
%\end{equation}
Moreover, the analytical upper bound on $\tau$ belongs to the solution set of the polynomial given by: 
\begin{equation}
    d\prod_{k=1}^{K} \left(\tau^*+b_k\right) - \sum_{k=1}^{K} a_k \prod_{\substack{l=1 \\ l\neq k}}^{K} \left(\tau^*+b_l\right) = 0
\label{Eq_36_AnalBound}    
\end{equation}
where $r_k^0 = C_k^0 - T$, $a_k = -\dfrac{r_k^0}{C_k^2}$ and $b_k = \dfrac{C_k^1}{C_k^2}$.
%, and such that $\tau \neq -\dfrac{C_k^1}{C_k^2}~\forall~k$.
\label{theorem1}
\end{theorem}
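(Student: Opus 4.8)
The plan is to read both bounds off the stationarity and complementary-slackness conditions of the Lagrangian in (\ref{lagrangian}). First I would record the stationarity equations: differentiating with respect to $\tau$ gives $-1 + \sum_{k}\lambda_k C_k^2 d_k - \alpha_0 = 0$, while differentiating with respect to each $d_k$ gives $\lambda_k(C_k^2\tau + C_k^1) + \nu_1 - \nu_2 - \alpha_k = 0$. Writing $\nu = \nu_1 - \nu_2$ for the net multiplier of the batch equality, the second family reads $\lambda_k(C_k^2\tau+C_k^1) = \alpha_k - \nu$. The first bound (\ref{Eq_30_BoundOnDk}) is in fact already visible without the dual variables, since it is nothing more than the time constraint (\ref{relaxed-time-const}) rearranged as $d_k \leq (T-C_k^0)/(C_k^2\tau + C_k^1)$, the denominator being positive because $C_k^1, C_k^2 > 0$ and $\tau \geq 0$; hence it holds at any feasible point and in particular at the optimum.

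The role of KKT is to show this bound is tight for the learners that matter, which is what feeds the second bound. Since MEL is only meaningful when $\tau^* > 0$, complementary slackness on its non-negativity constraint forces $\alpha_0 = 0$, so the $\tau$-stationarity condition collapses to $\sum_k \lambda_k C_k^2 d_k = 1$. For any learner actually allocated work ($d_k^* > 0$) the analogous argument gives $\alpha_k = 0$, hence $\lambda_k(C_k^2\tau^*+C_k^1) = -\nu$. If $\nu$ were zero then every such $\lambda_k$ would vanish, contradicting $\sum_k\lambda_k C_k^2 d_k = 1$; dual feasibility $\lambda_k \geq 0$ then forces $\nu < 0$ and $\lambda_k > 0$. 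With $\lambda_k > 0$, complementary slackness on (\ref{relaxed-time-const}) makes that constraint active, so $d_k^* = (T-C_k^0)/(C_k^2\tau^*+C_k^1)$, i.e.\ (\ref{Eq_30_BoundOnDk}) holds with equality; an idle learner contributes $d_k^* = 0$, for which the inequality is trivial whenever $T \geq C_k^0$.

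For the polynomial I would substitute the tight expression for $d_k^*$ into the batch constraint (\ref{relaxed-batch-const}). Using $T-C_k^0 = -r_k^0 = a_k C_k^2$ and $C_k^1 = b_k C_k^2$, every term collapses to $a_k/(\tau^*+b_k)$, so the constraint becomes $\sum_{k=1}^K a_k/(\tau^*+b_k) = d$. Clearing denominators by multiplying through by $\prod_{l=1}^K(\tau^*+b_l)$ turns this into the degree-$K$ polynomial equation (\ref{Eq_36_AnalBound}), as claimed.

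The main obstacle, and the reason the statement is phrased as an upper bound rather than an exact characterization, is the non-convexity noted before the theorem: the KKT system is only necessary, so the value of $\tau$ solving (\ref{Eq_36_AnalBound}) need not be attained, and the polynomial may admit several real roots. I would resolve this through monotonicity: the map $\tau \mapsto \sum_k a_k/(\tau+b_k)$ is strictly decreasing on the feasible range (each $a_k > 0$ once $T \geq C_k^0$), and the feasibility inequality $d = \sum_k d_k^* \leq \sum_k a_k/(\tau^*+b_k)$ that follows from (\ref{Eq_30_BoundOnDk}) shows no feasible $\tau$ can exceed the root at which equality holds. Selecting the correct (real, feasible) root of (\ref{Eq_36_AnalBound}) among its $K$ solutions is the delicate step; the surrounding algebra is routine.
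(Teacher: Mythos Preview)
Your proposal is correct and follows essentially the same route as the paper: read the bound (\ref{Eq_30_BoundOnDk}) off primal feasibility, use complementary slackness to make it tight, substitute into $\sum_k d_k = d$, and clear denominators to obtain (\ref{Eq_36_AnalBound}). In fact you are more careful than the paper on two points it glosses over: you actually argue \emph{why} the active learners have $\lambda_k>0$ (via the $\tau$-stationarity equation and the sign of $\nu$), and you supply the monotonicity argument that turns the polynomial root into a genuine upper bound on $\tau^*$, whereas the paper simply asserts both steps.
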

\begin{proof}
The proof of this theorem can be found in Appendix A.
\end{proof}

Though it was expected that the above bounds\ignore{ for $d_k^*$ $\forall~k$ in (\ref{Eq_30_BoundOnDk}) and $\tau^*$ from solving (\ref{Eq_36_AnalBound})} should undergo suggest-and-improve (SAI) steps to reach a feasible solution, we have found in our extensive simulations presented in Section V that these expressions were always already feasible. This means that no SAI steps were needed, and that the $d_k^*$ expressions can be directly used\ignore{ for batch allocation} to achieve the optimal $\tau^*$ for the relaxed problem.

\section{Simulation Results}
In this section, the proposed\ignore{ analytical-based} solution from the derived results in (\ref{Eq_30_BoundOnDk}) and (\ref{Eq_36_AnalBound}) (UB-Analytical) is tested in MEL scenarios emulating realistic\ignore{ learning and} edge node environments. We also show the merits of the proposed solution compared to the equal task allocation (ETA) scheme employed in \cite{Tuor_01_AdaptiveControl} as well as the OPTI-based numerical solution to the relaxed problem in (\ref{Eq_13_RelaxedProblem}).\ignore{ We will first introduce the simulation setting, then present the testing results.}

\subsection{Simulation Environment, Dataset, and Learning Model}
\ignore{In our simulation, }The considered edge nodes in the simulation are assumed to be located in an area of 50m of radius. Half of these nodes emulates the capacity of typical fixed/portable computing devices (e.g., laptops, tablets, road-side units) and the other half emulates the capacity of commercial micro-controllers (e.g., Raspberry Pi) that can be attached to different indoor or outdoor systems (e.g., smart meters, traffic cameras). The setting thus emulates an edge environment that can be located either indoor or outdoor. The employed channel model\ignore{ between these devices} is summarized in Table \ref{Table_1_OfParameters}, which emulates 802.11-type links between the edge nodes. 

We test our proposed scheme with the well-known MNIST \cite{MNIST_IEEE} dataset. This dataset consists of 60,000 images 28x28 images (784 features). The employed ML model for this data is a 3-layer neural network with the following configuration $[784, 300, 124, 60, 10]$. For this model, the weight matrix $\mathbf{w}\ignore{ = \left[w_1, w_2\right]}$ is the set\ignore{concatenation } of four sub-matrices, where $w_1 \text{ is } 784\times300 \text{, } w_2 \text{ is } 300 \times 124 \text{, } w_3 \text{ is } 124 \times 60  \text{ and } w_4 \text{ is } 60\times10$\ignore{, neither of which depending on the batch size ($S_d = 0$)}. Thus, the size of the model is 8,974,080 bits, which is fixed for all edge nodes. The forward and backward passes will require 1,123,736 floating point operations \cite{NNComplexity}.
%\textcolor{red}{Add weight descriptors and forward/backward passes.}

\begin{table}[!t]
\centering
\small
\caption{List of simulation parameters}
\begin{tabular}{|l|l|}
\hline
Parameter                                 & Value                         \\ \hline
Attenuation Model                           & $7+2.1\log(R)$ dB \cite{WiFiChannelModel}         \\ \hline
System Bandwidth $B$                        & 100 MHz                              \\ \hline
Node Bandwidth $W$                          & 5 MHz                               \\ \hline
Device proximity $R$                        & 50m                                 \\ \hline
Transmission Power $P_k$                   & 23 dBm                              \\ \hline
Noise Power Density $N_0$                  & -174 dBm/Hz                         \\ \hline
Computation Capability $f_k$             & 2.4 GHz and 700 MHz             \\ \hline
%Pedestrian Dataset size (d)                 & 9,000 images                 %        \\ \hline
%Pedestrian Dataset Features $(\mathcal{F})$  & 648 ($~18 \times 36~$) pixels     \\ \hline
MNIST Dataset size $d$                 & 60,000 images                        \\ \hline
MNIST Dataset Features $\mathcal{F}$  & 784 ($~28 \times 28~$) pixels     \\ \hline
%MNIST Dataset size      & 50,000 images  \\ \hline
\end{tabular}
\label{Table_1_OfParameters}
\end{table}
 
\subsection{Number of Local Updates}
\begin{figure}[t]
   \centering
   \includegraphics[scale=0.58]{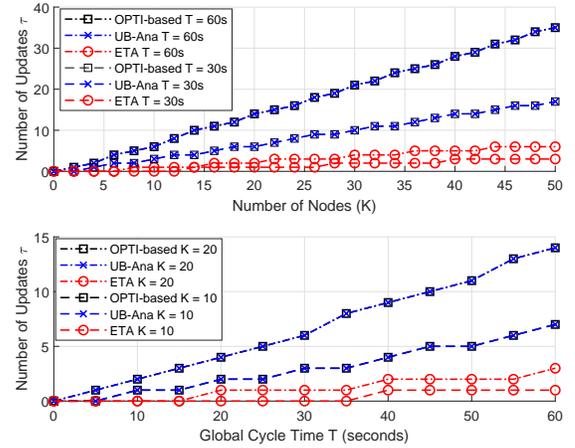}
   \caption{Achievable number of local update cycles by all schemes (a) vs $K$ for $T=30$s and $60$s (b) vs $T$ for $K=10$ and $20$.}
   \label{figure2label}
\end{figure}
\begin{figure}[t]
   \centering
   \includegraphics[scale=0.58]{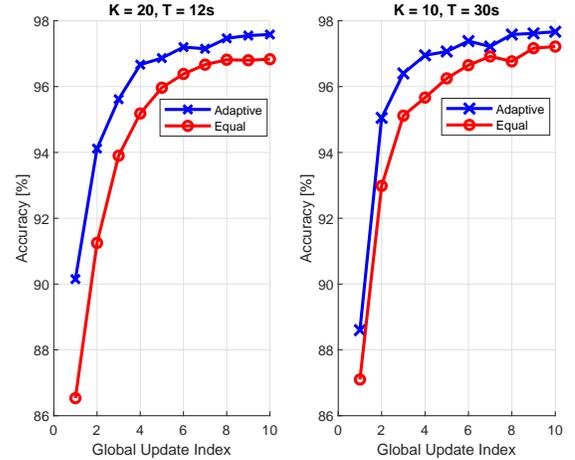}
   \caption{Learning accuracy progression after global cycles updates for $K=20$}
   \label{figure3label}
\end{figure}
Fig. \ref{figure2label} shows the results for training using the MNIST dataset with the aforementioned deep neural network model. The upper and lower sub-figures depict the number of local iterations $\tau$ achieved by all tested approaches versus the number of edge nodes (for $T=30$s and $T=60$s) and against the global cycle time (for $K=10$ and $K=20$), respectively. We can first notice from both figures that $\tau$ increases as the number of edge nodes increases. This is trivial as less batch sizes or more time will be allocated to each of the nodes as their number or the global cycle increase, respectively. We can also see that the performance of the UB-Analytical, and OPTI-numerical solutions are identical for all simulated number of edge nodes and global update clocks. 

We can finally observe from both sub-figures that the optimized scheme achieves a significantly larger number of local updates compared to the ETA scheme. For instance, the optimized scheme makes it possible to perform 6 updates (as opposed to 1) for 20 nodes with a cycle time of $30$s, a gain of 600\%. When $K = 10$, at $T= 60$s the OPTI-based approach for adaptive batch allocation gives $\tau = 7$ updates whereas only 3 updates are possible with ETA, a gain of 233\%. Another interesting result is that the performance of ETA scheme for $T=60$s or $K=20$ is actually much lower than the performance of our proposed solutions for $T=30$s or $K=10$ in the upper and lower sub-figures respectively. In other words, our scheme can achieve a better level of accuracy as the ETA scheme in half the time or half the number of nodes.

\subsection{Learning Accuracy}
The left and right sub-figures in Fig. \ref{figure3label} depict the progression of learning accuracy achieved by both, our adaptive scheme and ETA, right after global update cycles of $T=12$s each for $K=20$ and $T=30$s each for $K=10$. The figure shows a significant improvement achieved by our proposed scheme over ETA in reaching high accuracy in less number of global cycles. For $K=20$ and $T=12$s, to reach an accuracy of 96.67\%, the adaptive scheme require 4 cycles as opposed to 7 for ETA, a reduction of about 43\% (i.e., 48s). Furthermore, an accuracy of 97\% can be achieved with our scheme which is not possible with ETA. For the case of $K=10$ and $T=30$s, the optimized scheme can cross the 97\% mark for accuracy in 5 updates whereas ETA requires 9 updates which represents a reduction in time of 2 minutes or about 56\%. This clearly exhibits the merits of the adaptive scheme in reaching learning goals in much less time and number of cycles, which also saves significantly on nodes energy.

\section{Conclusion}
This paper inaugurates the research efforts towards establishing the novel MEL paradigm, enabling the design of optimized distributed learning solutions on wireless edge nodes with heterogeneous computing and communication capacities. As a first MEL problem of interest, the paper focused on exploring the adaptive task allocation solutions that would maximize the number of local learning iterations on distributed learners (and thus improving the learning accuracy), while abiding by the global cycle clock of the orchestrator. The problem was formulated as an NP-hard ILPQC problem, which was then relaxed into a non-convex problem over real variables. Analytical upper bounds on the relaxed problem's optimal solution were then derived and were found to solve it optimally in all simulated scenarios. Through extensive simulations using the well-known MNIST dataset, the proposed analytical solution was shown to both achieve the same performance as the numerical solvers of the ILPQC problem, and to significantly outperform the equal ETA approach.

\balance
\bibliographystyle{IEEEtran}
\bibliography{181EdgeLearning}

\appendices

%\ignore{%%%%%%%%%%%%%%%%%%%%%%%%%%%%%%%%%%%%%%%%%
\section{Proof of Theorem \ref{theorem1}}
From the KKT optimality conditions, we have the following relations given by the following equations:
\begin{equation}
    C_k^2  \tau d_k + C_k^1 d_k + C_k^0 - T \leq 0, \quad k=1,\ldots,K
    \label{Eq_20_KKTfirst}
\end{equation}
\begin{equation}
    \Gamma \succeq 0
\end{equation}
\begin{equation}
    \lambda_k\left(C_k^2  \tau d_k + C_k^1 d_k + C_k^0 - T\right) = 0, \quad k=1,\ldots,K
    \label{Eq_23_LambdaConstraints}
\end{equation}
\begin{equation}
    \nu_i (\sum_{k = 1}^{K}d_k - d) = 0     \quad i=1,2
    \label{Eq_24_EqualityConstraints}
\end{equation}
\begin{equation}
    \alpha_0 \tau = 0
\end{equation}
\begin{equation}
    \alpha_k d_k = 0   \quad k=1,\ldots,K
\end{equation}
\begin{multline}
     -\nabla\tau + \sum_{k = 1}^{K} \lambda_k\nabla\left(C_k^2  \tau d_k + C_k^1 d_k + C_k^0 - T\right) + \\ \nu_1\nabla\left(\sum_{k = 1}^{K}d_k - d\right) - \nu_2\nabla\left(\sum_{k = 1}^{K}d_k - d\right) - \\ \alpha_0\nabla\tau - \alpha_k\nabla\left(\sum_{k = 1}^{K}d_k\right) = 0 \label{Eq_20_KKTlast}      
\end{multline}
     
%It is clear that From  , we have $\nu_1 \nu_2 = 0 \text{ or } \sum_{k = 1}^{K}d_k = d$  

%From $\alpha_0 = 0 \text{ or } \tau = 0$

%From $\alpha_k = 0 \text{ or } d_k = 0$

From the conditions in (\ref{Eq_20_KKTfirst}), we can see that the batch size at user $k$ must satisfy (\ref{Eq_30_BoundOnDk}). Moreover, it can be inferred from (\ref{Eq_23_LambdaConstraints}) that the bound in (\ref{Eq_30_BoundOnDk}) holds with equality for $k$ having  $\lambda_k \neq 0$.

In addition, it is clear from (\ref{Eq_24_EqualityConstraints}) that either $\nu_1 = \nu_2 = 0$ (which means that there is no feasible MEL solution and the orchestrator must offload the entire task to the edge/cloud servers) or $\sum_{k = 1}^{K}d_k = d$ (which we get if the problem is feasible). By re-writing the bound on $d_k$ in (\ref{Eq_30_BoundOnDk}) as an equality and taking the sum over all $k$, we have the following relation:
\begin{equation}
    d= \sum_{k=1}^{K} d_k^* = \sum_{k=1}^{K}  \left[ \dfrac{T-C_k^0}{\tau^* C_k^2+C_k^1} \right] = \sum_{k=1}^{K}  \left[ \dfrac{a_k}{\tau^* + b_k} \right]
    \label{Eq_37_BoundOnDkSums}
\end{equation}
%By summing the equations in (\ref{Eq_30_BoundOnDk}), setting the result to $d$, the equality can be written in the following form:    
%\begin{equation}
   % d = \sum_{k=1}^{K}  \left[ \dfrac{a_k}{\tau^* + b_k} \right]
  %  \label{Eq_38_BoundOnTauSum}
%\end{equation}
The expression on the right-most hand-side has the form of a partial fraction expansion of a rational polynomial function of $\tau$. Therefore, we can expand it in the following way:
\begin{multline}
    \dfrac{a_1}{\tau + b_1} + \dfrac{a_2}{\tau + b_2}  + \dots + \dfrac{a_k}{\tau + b_k} + \dots +
    \dfrac{a_K}{\tau + b_K} = \\ \dfrac{1}{(\tau + b_1)(\tau + b_2)\dots(\tau + b_k)\dots(\tau + b_K)} \times \\ \Bigg[ a_1(\tau + b_2)(\tau + b_3)\dots(\tau + b_k)\dots(\tau + b_K) + ~  \\  a_2(\tau + b_1)(\tau + b_3)\dots(\tau + b_k)\dots(\tau + b_K) + ... +  \\ . a_k(\tau + b_1)(\tau + b_2)\dots(\tau + b_{k-1})(\tau + b_{k+1})\dots(\tau + b_K)  \\  + ... +  a_K(\tau + b_1)(\tau + b_2)\dots(\tau + b_{K-1}) \Bigg]
\end{multline}
Finally, the expanded form can be cleaned up in the form of a rational function with respect to $\tau$, which is equal to the total dataset size $d$.  
\begin{equation}
    d = \dfrac{\sum_{k=1}^{K} a_k \prod_{\substack{l=1 \\ l\neq k}}^{K} \left(\tau^*+b_l\right)}{\prod_{k=1}^{K} \left(\tau^*+b_k\right)}
\label{Eq_39_TauBoundsPolynomialForm}    
\end{equation}
%Please note that the numerator will be of degree $K-1$ and the denominator will be of degree $K$. 
Please note that the degrees of the numerator and denominator will be $K-1$ and $K$, respectively. Furthermore, the poles of the system will be $-b_k$, and, since $b_k \geq 0$, the system will be stable. Furthermore, $\tau = -b_k$ is not a feasible solution for the problem, because it is eliminated by the $\tau \geq 0$ constraint. Therefore, we can re-write (\ref{Eq_39_TauBoundsPolynomialForm}) as shown in (\ref{Eq_36_AnalBound}). By solving this polynomial, we obtain a set of solutions for $\tau$, one of them is feasible. The problem being non-convex, this feasible solution $\tau^*$ will constitute the upper bound to the solution of the relaxed problem. 

%}%%%%%%%%%%%%%%%%%%%%%%%%%%%%%%%%%%%%%%%%%%%

\end{document}